\definecolor{codegreen}{rgb}{0,0.6,0}
\definecolor{codegray}{rgb}{0.5,0.5,0.5}
\definecolor{codepurple}{rgb}{0.58,0,0.82}
\definecolor{backcolour}{rgb}{0.95,0.95,0.92}
\lstdefinestyle{my_code_style}{
  backgroundcolor=\color{backcolour}, 
  commentstyle=\color{codegreen},
  keywordstyle=\color{magenta},
  numberstyle=\tiny\color{codegray},
  stringstyle=\color{codepurple},
  basicstyle=\ttfamily\footnotesize,
  breakatwhitespace=false,         
  breaklines=true,                 
  captionpos=b,                    
  keepspaces=true,                 
  numbers=left,                    
  numbersep=5pt,                  
  showspaces=false,                
  showstringspaces=false,
  showtabs=false,                  
  tabsize=2
}
\newcommand{\refequ}[1] {Eq.~\eqref{equ:#1}}
\newcommand{\reffig}[1] {Fig.~\ref{fig:#1}}
\newcommand{\reftbl}[1] {Table~\ref{tbl:#1}}
\newcommand{\refsec}[1] {Sec.~\ref{sec:#1}}
\newcommand{\refapp}[1] {App.~\ref{app:#1}}
\newcommand{\reflem}[1] {Lemma~\ref{lem:#1}}
\newcommand{\st}{\text{s.t.}}
\newcommand{\dx}{\vu}
\newcommand{\vecFont}[1]{\mathbf{#1}}
\def\vb{{\vecFont{b}}}
\def\ve{{\vecFont{e}}}
\def\vg{{\vecFont{g}}}
\def\vu{{\vecFont{u}}}
\def\vx{{\vecFont{x}}}
\newcommand{\matFont}[1]{\mathbf{#1}}
\def\mA{{\matFont{A}}}
\def\mH{{\matFont{H}}}
\def\mP{{\matFont{P}}}
\newtheorem{theorem}{Theorem}[section]
\newtheorem{lemma}[theorem]{Lemma}
\newcommand{\changed}[1]{{#1}}
\colorlet{RED}{red} %
\newlength\savedwidth
\newcommand\whline[1]{\noalign{\global\savedwidth\arrayrulewidth
                               \global\arrayrulewidth #1} %
                      \hline
                      \noalign{\global\arrayrulewidth\savedwidth}}
\lstdefinelanguage{Pseudocode}%
  {morekeywords={abstract,break,case,catch,const,continue,do,else,elseif,%
      end,export,false,for,function,immutable,import,importall,if,in,%
      macro,module,otherwise,quote,return,switch,true,try,type,typealias,%
      using,while,either,or,max,abs,foreach},%
   sensitive=true,%
   alsoother={$},%
   morecomment=[l]\#,%
   morecomment=[n]{\#=}{=\#},%
}[keywords,comments,strings]%
\definecolor{derekblue}{RGB}{144,187,195}
\definecolor{darkderekblue}{RGB}{86,130,140}
\definecolor{verylightgray}{RGB}{245,245,245}
\bfseries\color{darkderekblue},
\begin{document}

\title{\changed{Trust-Region} Eigenvalue Filtering for Projected Newton}

\author{Honglin Chen}
\affiliation{%
  \institution{Columbia University}
  \country{United States of America}
  }
\email{honglin.chen@columbia.edu}

\author{Hsueh-Ti Derek Liu}
  \affiliation{%
    \institution{Roblox \& University of British Columbia}
    \country{Canada}}
  \email{hsuehtil@gmail.com}

\author{Alec Jacobson}
\affiliation{%
  \institution{University of Toronto \& Adobe Research}
  \country{Canada}}
\email{jacobson@cs.toronto.edu}

\author{David I.W. Levin}
\affiliation{%
  \institution{University of Toronto \& NVIDIA}
  \country{Canada}}
\email{diwlevin@cs.toronto.edu}

\author{Changxi Zheng}
  \affiliation{%
    \institution{Columbia University}
    \country{United States of America}}  
  \email{cxz@cs.columbia.edu}

\begin{teaserfigure}
\centering
  \includegraphics[width=0.96\linewidth]{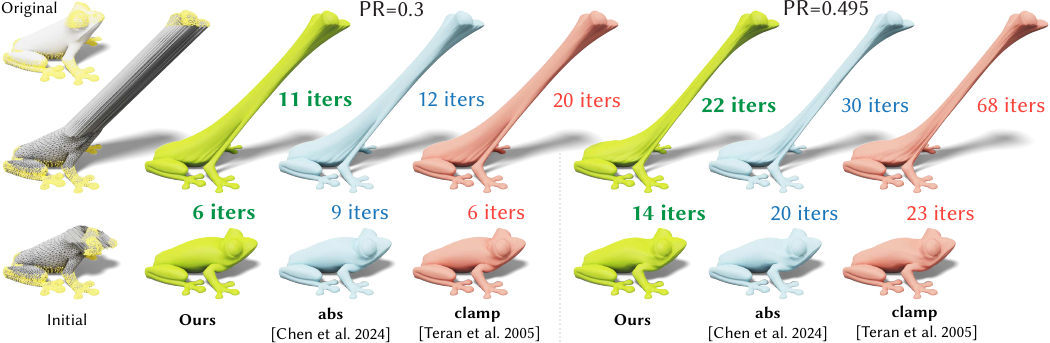}
  \caption{Our adaptive eigenvalue projection scheme stabilizes the projected Newton optimization of stable Neo-Hookean energy under high Poisson's ratios (PR) and large initial volume change while maintaining fast convergence in other cases.
  Here the fixed vertices are colored in yellow. Our method consistently outperforms the other approaches.
  }
  \label{fig:teaser}
\end{teaserfigure}

\begin{abstract}
We introduce a novel adaptive eigenvalue filtering strategy to stabilize and accelerate the optimization of Neo-Hookean energy and its variants under the Projected Newton framework.
For the first time, we show that Newton's method, Projected Newton with eigenvalue clamping and Projected Newton with absolute eigenvalue filtering can be unified \changed{using ideas from the generalized trust region method}. 
Based on the trust-region fit, our model adaptively chooses the correct eigenvalue filtering strategy to apply during the optimization.
Our method is simple but effective, requiring only two lines of code change in the existing Projected Newton framework.
We validate our model outperforms stand-alone variants across a number of experiments on \changed{quasistatic simulation of} deformable solids over a large dataset.
\end{abstract}

\maketitle

\vspace{-1mm}
\section{Introduction}
Hyperelastic simulations are indispensable in capturing the rich visual behaviors of deformable objects -- from highly compressible foam to nearly incompressible rubber and human tissues.
However, the numerical optimization of these energies can be notoriously challenging due to the non-convexity of the volume-preserving term.
This high non-convexity can lead to unstable and slow convergence in commonly applied Projected Newton optimizers, especially under high Poisson's ratios (near $0.5$) and large initial volume change.
This necessitates the use of eigenvalue filtering strategies to enforce the positive definiteness of the Hessian matrix and guarantee convergence of the optimization algorithm.

In this work we depart from the Projected Newton perspective and show that (somewhat surprisingly), we can unify existing eigenvalue clamping and filtering strategies under the \changed{\emph{generalized trust region}} formalism.
This motivates us to rethink the eigenvalue filtering step. Instead of \changed{\textit{a priori}} choosing eigenvalue filtering strategy ahead of time,  we show that a significant reduction in Newton iterations can be obtained by adaptively choosing the correct strategy based on how well a quadratic model fits the underlying nonlinear energy in a particular trust-region.
This leads to an extremely \emph{elegant} and \emph{effective} solution that requires only two lines of code change \changed{in} an existing Projected Newton solver to implement our more performant, trust-region-based approach (see Sec.\ref{lst:our_method}).

We demonstrate the effectiveness and efficiency of our method through a wide range of challenging examples, including different deformations, physical parameters, geometries, mesh resolutions, and elastic energies.
Through extensive experiments, we show that \changed{on average,} our adaptive method outperforms all other stand-alone variants in terms of stability and convergence speed \changed{in quasistatic simulations}.
Our method is robust across different mesh resolutions and Poisson's ratios and highly distorted and inverted elements, while maintaining the same per-iteration computational cost and simplicity of implementation.

To summarize, our contributions are as follows:

\begin{enumerate}
  \item We offer a theoretical analysis of nonconvex optimization of Neo-Hookean energy based on the generalized trust region framework. 
  For the first time, we show that Newton's method, Projected Newton with eigenvalue clamping and Projected Newton with absolute eigenvalue filtering can be unified \changed{based on insights from the generalized trust region method}.

  \item Based on our analysis, we propose a novel adaptive eigenvalue filtering strategy which correctly chooses the strategy to apply based on the trust-region fit.
  \changed{Our method is still formally a (regularized) Projected Newton, where the regularization parameter mimics the Lagrange multiplier arising in trust-region subproblems. }
  Our approach ensures stability under high Poisson's ratios and large initial volume change, while still maintaining fast convergence in all other regimes.
\end{enumerate}

\section{Related Work}\label{sec:related}

\paragraph{Projected Newton's method.}

Projected Newton's methods rely on projection strategies to restore the positive definiteness of the Hessian matrix (to guarantee convergence \changed{to a local minimum}).
As an example, \citet{Paternain2019nonconvex_newton} proposed projecting eigenvalues of the \emph{global} Hessian to a small positive value when its magnitude is small, and to its absolute value otherwise.
Due to the computational cost of a global eigendecomposition, in a finite element setting, these projection often operate on the element-wise Hessian submatrices. 
Choices of projection range from clamping negative eigenvalues to a small positive number (or zero)~\cite{irving05},  adding a diagonal~\cite{FuL16} or setting negative eigenvalues to their absolute values\changed{~\cite{chen2024abs_eig,num_opt2006,gill1981practical}}.
Some strategies eschewed eigenvalue filtering which allows for modifications to the global hessian.
For instance, \citet{longva2023pitfalls} either suggested using the original
Hessian matrix whenever it is positive definite or adding multiples of the mass matrix until Cholesky Factorization succeeds.

\paragraph{Trust-region method.}

Trust-region methods have been extensively studied in optimization literature \cite{Conn2000TrustRegion, num_opt2006} for its stability and robustness.
In some sense it can be viewed as the dual of line-search method: it first chooses a step size and then picks a step direction within it.
\citet{Sorensen1982newton} first proposed the trust region method by making a trust region modification to Newton's method. 
\citet{more1983trust_region_step} studied the strategy to pick the trust region step.
\citet{more1993generalizations} extended it to the generalized trust region method, which has been actively studied until today \cite{pong2014upper_lower_bound, wang2020generalized}.
\changed{Several works combined the idea of trust-region and line search method and propose the regularized Newton's method \cite{regularized_newton_2014,regularized_newton_2015,regularized_newton_2023}.}
In machine learning, \citet{dauphin2014saddle_free} introduced the saddle-free Newton method, which is a generalized trust region method with a first-order model and a quadratic constraint.

\paragraph{Neo-Hookean Energy}
Several recent works have focused on improving the stability of Neo-Hookean energy \cite{ogden1997non}.
To improve robustness to mesh inversion and rest-state stability, \citet{smith2018snh}
proposed the stable Neo-Hookean energy and applied their analysis to stabilize other hyperelastic variants \cite{kim22deformables,lin2022isoARAP}.
\citet{chen2024abs_eig} further showed the high non-convexity of the stable Neo-Hookean energy stems from high Poisson's ratio and large volume change, and proposed an eigenvalue filtering scheme to stabilize the optimization. 
\section{Background}

\begin{figure}
  \centering
  \includegraphics[width=0.96\linewidth]{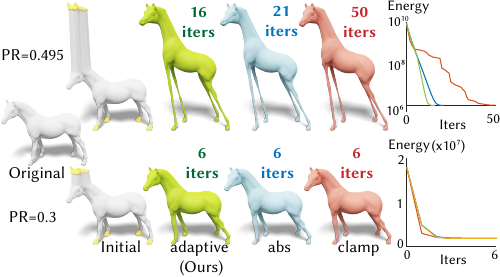}
  \includegraphics[width=0.96\linewidth]{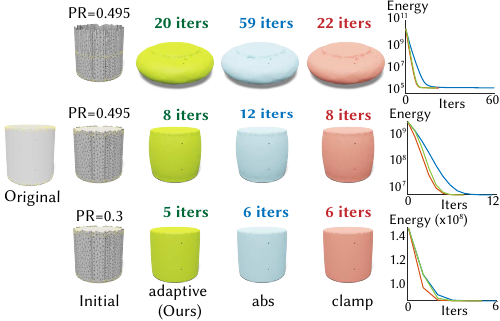}
  \caption{Our adaptive strategy stabilizes the optimization under large volume
  changes and high Poisson's ratios (PR), while maintaining fast convergence in other
  cases. The three columns in the middle 
  visualize the deformed meshes after the energy minimization converges in our method (green), 
  absolute eigenvalue projection (blue), and eigenvalue clamping (red).
  The number of iterations for each method until convergence is labeled on the top.
  The last column shows the energy convergence curves
  with respect to the number of iterations.} 
  \label{fig:stretch_compress}
  \vspace{-1mm}
\end{figure}
In many computer graphics applications, such as quasistatic hyperelastic simulation, an optimization problem arises:
\begin{align}\label{equ:min}
  \min_{\vx}\ f(\vx)\;,
\end{align}
where $f$ denotes a (twice differentiable) energy with parameters $\vx$. 
We start by briefly reviewing two seemingly separate methods for solving this optimization problem\textemdash namely,
the \emph{projected Newton's method} and \emph{trust-region method}\textemdash to lay out the foundation for our contributions,
which revolve around a formal connection between these two methods (which we will develop in \refsec{method}).
Readers familiar with these methods may directly jump to \refsec{method} instead.

\paragraph{Newton's method.}
In many cases, one can solve \refequ{min} with Newton's method by taking the second-order Taylor expansion of $f(\vx)$:
\begin{align}\label{equ:taylor_approximation}
  \tilde{f}(\vx + \dx) \approx f(\vx) + \vg^\top \dx + \frac{1}{2} \dx^\top \mH \dx,
\end{align}
where $\vg = \nabla f(\vx)$ and $\mH = \nabla^2 f(\vx)$ are the gradient and
the Hessian of $f$, respectively. Then, the optimal update $\dx$ based on the quadratic approximation $\tilde{f}$ can be obtained by solving a linear system
\begin{align}\label{equ:step}
  \dx = - \mH^{-1} \vg.
\end{align}
It is assumed that $\dx$ is an update direction that leads to a decrease of the energy $f$, and thus iterations between the above two steps will converge \changed{to a local minimum}.
More precisely, this assumption requires that 
\textbf{1)} the Hessian $\mH$ to be \emph{positive definite} and \textbf{2)} 
the second-order expansion $\tilde{f}$ being a \emph{close} quadratic approximation of the energy function $f$.
Unfortunately, these requirements do not always hold in practice.

\subsection{Projected Newton's Method}\label{sec:projected_newton}

Regarding the requirement 1), if $\mH$ is indefinite or negative definite, even if $\tilde{f}$ is a close approximation of $f$, the update $\dx$ may still be an \emph{ascent} direction,
because their \emph{critical points} $\nicefrac{\partial \tilde{f}}{\partial \dx} = 0$ are not at the minimum of $\tilde{f}$.

As a response to this situation,
the \emph{Projected Newton's method} approximates $\mH$ in \refequ{taylor_approximation} with a positive definite matrix. 
A straightforward approach performs eigenvalue analysis on
$\mH$, \changed{modifies} the negative eigenvalues \changed{to positive ones}, and then reconstructs a \changed{positive definite} matrix using the updated eigenvalues. Yet, 
for large matrices, this process is computationally intractable.

\begin{figure}[t]
  \centering
  \includegraphics[width=0.96\linewidth]{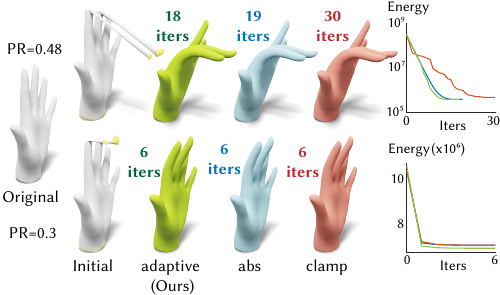}
  \includegraphics[width=0.96\linewidth]{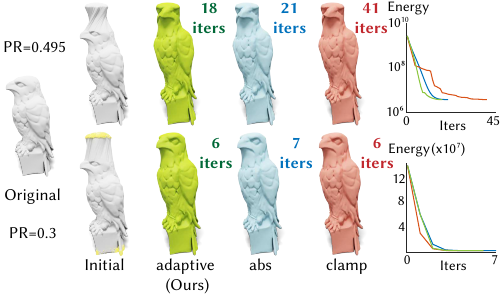}
  \caption{Two other examples in experiments similar to \reffig{stretch_compress}.
  Here we compare our method with others under bending (top) and twisting (bottom) deformations.  
  In all cases, our method requires the least number of iterations to converge.} 
  \label{fig:bend_twist}
  \vspace{-2mm}
\end{figure}

\paragraph{Eigenvalue clamping.}
In finite element simulations, a more efficient approach exists:
Notice that the (global) Hessian $\mH$ is assembled
by summing up the Hessian matrix $\mH_i$ from each finite element $i$, that is,
\begin{align}
  \mH = \sum_i \mP_i^\top \mH_i \mP_i,
\end{align}
where $\mP_i$ denotes the selection matrix that maps the per-element indices to the global indices.
Leveraging this property, \citet{irving05} proposed to project each
per-element Hessian $\mH_i$ to a positive definite one $\mH_i^+$. 
Their approach clamps the eigenvalues $\lambda_k$ of each $\mH_i$ to \changed{zero (or a small
positive value $\epsilon$)} using
\begin{align}\label{equ:clamping}
  \lambda_k^+ & = 
  \begin{cases}
    \epsilon & \text { if } \lambda_k \leq \epsilon, \\
    \lambda_k & \text { otherwise,}
    \end{cases}
\end{align}
and computes $\mH_i^+$ with the clamped eigenvalues.
This approach requires only a tiny eigenvalue analysis for each finite element, all of
which can be computed in parallel, and it guarantees the reconstructed global
Hessian $\mH^+$ from 
\begin{align}
  \mH^+ = \sum_i \mP_i^\top \mH_i^+ \mP_i,
\end{align}
to be positive definite \cite{Rockafellar1970convex}, leading to a more stable Newton's solver for minimizing $f$.
\changed{Even though the negative eigenvalues of per-element Hessian are clamped to zero, the additive contributions of neighboring elements and boundary conditions always lead to a positive-definite global Hessian~\cite{irving05}. }

\paragraph{Absolute eigenvalue projection.}
For a highly nonconvex energy landscape\textemdash such as one that arises in simulation of nearly incompressible materials, 
\changed{eigenvalue clamping (\refequ{clamping}) could lead to 
unstable search directions and an excessively large number of iteration steps \cite{chen2024abs_eig}. }

This issue stems from the violation of the requirement 2) stated above.
For a highly nonconvex energy function $f$,
the quadratic approximation $\tilde{f}$ tends to be a poor approximation of $f$.
\citet{chen2024abs_eig} further proposed a remedy for this issue by projecting the negative eigenvalues 
of each $\mH_i$ to their absolute values:
\begin{align}\label{equ:abs}
  \lambda_k^+ = \left| \lambda_k \right|.
\end{align}
They demonstrated that this absolute-value projection strategy leads to orders of magnitude speed-up in the best case. 
But in other nearly convex cases (e.g., materials involving minor volume changes), this strategy can slightly damp the convergence along the negative curvature directions.

In short, how to choose the projection strategy for any given simulation
scenario remains to be an open question. 

\subsection{(Generalized) Trust-Region Method} \label{sec:trust_region_background}

Separate from the projected Newton's method, \emph{Trust-region methods}~\cite{Sorensen1982newton,Conn2000TrustRegion} were proposed in response to the violation 
of the requirement 2) stated above.
The key idea here is to identify a \emph{radius} around the current solution $\vx$ where the quadratic function $\tilde{f}$ is a sufficiently 
tight approximation of $f$. 
Then, the update $\dx$ (including both the update direction and the step size) is
computed to be the optimal solution within this ``trust-worthy''
region~\cite{more1983trust_region_step}.  Formally, the trust region
method can be written as a constrained optimization problem:
\begin{equation}  
  \begin{aligned} 
  \label{equ:trust_region_method}
  \min_\dx\quad &\tilde{f}(\vx + \dx) = f(\vx) + \vg^\top \dx + \frac{1}{2} \dx^\top \mH \dx\\
  \st\quad &\| \dx \|^2 = \dx^\top \dx \leq \Delta
  \end{aligned}
\end{equation}
where $\Delta \in \mathbb{R}$ denotes the squared ball radius of the trust
region. A key component in trust region methods lies in dynamic adjustment of the
trust region size $\Delta$ to ensure sufficiently small error coming from the
quadratic approximation.

Vanilla trust region methods measure the radius based on the \emph{Euclidean
ball} distance $\| \dx \|^2 = \dx^\top \dx$. But in practice, many problems
favor measuring distance by an underlying (Riemannian or non-Riemannian) metric.
In other words, the bound should adapt to the underlying energy landscape.
This observation motivates the developement of the \emph{generalized trust region method}
\cite{more1993generalizations, pong2014upper_lower_bound} which replace the
Euclidean distance $\| \dx \|$ constraint by a quadratic measure: 
\begin{equation}  
  \begin{aligned} 
    \label{equ:generalized_trust_region_method}
  \min_\dx\quad &\tilde{f}(\vx + \dx) = f(\vx) + \vg^\top \dx + \frac{1}{2} \dx^\top \mH \dx\\
  \st\quad &\Delta_\text{lower} \leq \dx^\top \mA \dx + \dx^\top \vb \leq \Delta_\text{upper}.
  \end{aligned}
\end{equation}
where $\mA$ is a symmetric matrix that could be non-positive definite and $\Delta_\text{lower}/\Delta_\text{upper}$ denote the lower/upper bound of the trust region.

\section{Our Method}\label{sec:method}

The efficacy of eigenvalue filtering strategies discussed in \refsec{projected_newton} depends on specific simulation scenarios~\cite{chen2024abs_eig}. 
Interestingly, we discover that these strategies are fundamentally connected, and they can be viewed in an unified framework,
which in turn enables automatic adaptation of eigenvalue filtering strategy during the Newton's iterations.
We now describe this unified framework.

\subsection{Per-element Generalized Trust-region \changed{Newton} Method}\label{sec:gen_trm}
The starting point of our development is the generalized trust-region method (\refsec{trust_region_background}).
Consider a specific instance of the generalized trust-region problem~\eqref{equ:generalized_trust_region_method}, wherein
the lower and upper bounds have the same value but with opposite signs (i.e., $\Delta_\text{upper} = -\Delta_\text{lower} = \Delta$).
This trust-region problem has the following form:
\begin{equation}  
  \begin{aligned} 
\underset{\dx}{\min } \quad & f(\vx)+\vg^\top \dx + \dx^{\top} \mH \dx\\
\text {s.t.}  \quad & -\Delta \leq  \dx^\top \mA \dx + \dx^\top \vb  \leq \Delta.
\end{aligned} 
\end{equation}
where the energy is the quadratic approximation of a simulation energy $f$ (see \refequ{taylor_approximation}). We use $\vg$ to denote the gradient and $\mH$ to denote the Hessian. 
If we choose $\vb = 0, \mA = \mH$,  the two inequalities (lower and upper bounds) can be re-written as a single inequality with the absolute value. 
These specifications lead to the following instance of generalized trust-region problem:
\begin{equation}  
  \begin{aligned} 
\underset{\dx}{\min } \quad & f(\vx)+\vg^\top \dx + \dx^{\top} \mH \dx\\
\text {s.t.}  \quad & | \dx^{\top} \mH \dx | \leq \Delta .
\end{aligned} 
\label{equ:trust_region_newton_global}
\end{equation}

\begin{figure}[t]
  \centering
  \includegraphics[width=\linewidth]{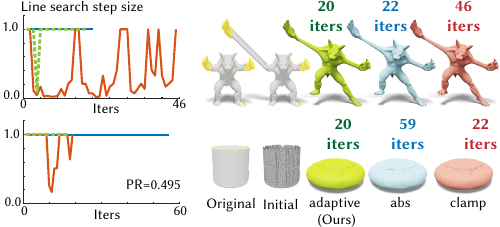}
  \caption{We demonstrate the effectiveness of our descent direction through the line search step size during the optimization.
  Our trust-region based adaptive scheme often requires only few iterations of line search to find a suitable step size. } 
  \label{fig:line_search_iter}
  \vspace{-2mm}
\end{figure}

In finite element simulation, the Hessian $\mH$ can often be constructed by
summing up per-element Hessians $\mH_i$ from each tetrahedral element $i$. 
Leveraging this property, we can re-write the constraint in
\eqref{equ:trust_region_newton_global} using the following lemma:

\begin{lemma}
Let $|\mA_e|$ be the matrix obtained by performing per-element absolute eigenvalue projection of $\mA$, i.e., $ |\mA_e| = \sum_i \mP_i^{\top} |\mA_i| \mP_i $, where $|\mA_i|$ are the matrix obtained by taking the absolute value of each of the eigenvalues of $\mA_i$.
Then it holds that $\left|\vx^{\top} \mA \vx\right|  \leq \vx^{\top} |\mA_e| \vx $.
\label{lem:abs_per_element_constraint}
\end{lemma}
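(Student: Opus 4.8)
The plan is to exploit the finite-element assembly structure $\mA = \sum_i \mP_i^\top \mA_i \mP_i$ to reduce the global inequality to a sum of per-element inequalities, and then to resolve each per-element term using the spectral decomposition of the symmetric submatrix $\mA_i$. First I would introduce the local coordinates $\vy_i = \mP_i \vx$, so that the quadratic form splits as $\vx^\top \mA \vx = \sum_i \vy_i^\top \mA_i \vy_i$ and, analogously, $\vx^\top |\mA_e| \vx = \sum_i \vy_i^\top |\mA_i| \vy_i$. The goal is then to bound $|\sum_i \vy_i^\top \mA_i \vy_i|$ by $\sum_i \vy_i^\top |\mA_i| \vy_i$.

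The next step is a double application of the triangle inequality. Pulling the absolute value inside the sum over elements gives $|\sum_i \vy_i^\top \mA_i \vy_i| \leq \sum_i |\vy_i^\top \mA_i \vy_i|$, so it suffices to prove the single-element estimate $|\vy_i^\top \mA_i \vy_i| \leq \vy_i^\top |\mA_i| \vy_i$ for every $i$. This is where the symmetry of the per-element Hessian $\mA_i$ enters: writing its eigendecomposition $\mA_i = \mQ_i \Lambda_i \mQ_i^\top$ with $\mQ_i$ orthogonal and $\Lambda_i = \text{diag}(\lambda_1, \ldots, \lambda_n)$, the projected matrix is $|\mA_i| = \mQ_i |\Lambda_i| \mQ_i^\top$ by definition. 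Setting $\vz = \mQ_i^\top \vy_i$ diagonalizes both forms simultaneously: $\vy_i^\top \mA_i \vy_i = \sum_k \lambda_k z_k^2$ and $\vy_i^\top |\mA_i| \vy_i = \sum_k |\lambda_k| z_k^2$. Since each $z_k^2 \geq 0$, one more triangle inequality gives $|\sum_k \lambda_k z_k^2| \leq \sum_k |\lambda_k| z_k^2$, which is exactly the per-element claim. Chaining the two estimates together yields the lemma.

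I expect no serious obstacle here; the argument is elementary once the assembly structure is made explicit. The only point requiring genuine care is the implicit assumption that each $\mA_i$ is symmetric, so that a real orthogonal eigendecomposition and the absolute-value projection are well defined — this holds because the $\mA_i$ are per-element Hessian blocks, but it should be stated rather than left silent. A secondary subtlety is that factoring the global quadratic form into local ones relies on the global matrix being assembled exactly as $\mA = \sum_i \mP_i^\top \mA_i \mP_i$; I would flag this premise at the outset, since the slack in the bound enters precisely through elements — and eigen-directions within an element — whose quadratic contributions disagree in sign.
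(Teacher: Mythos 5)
Your proof is correct and follows essentially the same route as the paper's: both exploit the assembly structure $\mA = \sum_i \mP_i^\top \mA_i \mP_i$, diagonalize each symmetric per-element block, and bound the signed eigenvalue-weighted sum of squares by its absolute-value counterpart. The only cosmetic difference is that you apply the triangle inequality in two stages (over elements, then over eigenvalues within each element), whereas the paper applies it once to the combined double sum $\sum_i \sum_k \lambda_{ik} \left( (\mP_i \vx)^\top \ve_{ik} \right)^2$ --- the two are equivalent.
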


\begin{proof} 
See \refapp{proof_lemma_1}.
\end{proof}

By applying \reflem{abs_per_element_constraint}, we have the following relationship: 
\begin{align}
   \dx^{\top}  |\mH_e| \dx \leq \Delta \implies |\dx^{\top}  \mH \dx| \leq \Delta,
\end{align}
from which \refequ{trust_region_newton_global} can be re-written as
\begin{equation}  
  \begin{aligned} 
\underset{\dx}{\min } \quad & f(\vx)+\vg^\top \dx + \dx^{\top} \mH \dx\\
\text {s.t.}  \quad & \dx^{\top}  |\mH_e| \dx \leq \Delta .
\end{aligned} 
\label{equ:trust_region_newton_hybrid_per_element}
\end{equation}  
This reformulation is a more conservative version of \refequ{trust_region_newton_global}; it
``tightens'' the size of trust region, but still ensures that
the search radius imposed by the constraint satisfies the original problem \refequ{trust_region_newton_global}.

During Newton's iterations, the trust-region size $\Delta$ may change.
When $\Delta$ is sufficiently large, the minimum point of the unconstrained quadratic approximation is within the trust region,
and this step is the same as the original Newton step.
But more often, the unconstrained quadratic minimizer lies outside the trust region. Then, the constrained minimizer will occur on the 
trust-region boundary.

Suppose $\Delta$ is sufficiently small so that the constrained minimizer occurs on the trust-region boundary.
In this situation,
we can replace the inequality constraint in \refequ{trust_region_newton_hybrid_per_element} with an equality constraint $\dx^{\top}  |\mH_e| \dx = \Delta$, 
and use Lagrangian multipliers to obtain its solution:
\begin{equation}  
  \begin{aligned} 
\vg + 2 \mH \dx + 2 \lambda |\mH_e| \dx = 0. 
\end{aligned} 
\label{equ:lagrangian_trust_region_newton_per_element}
\end{equation}  
Here, the Lagrangian multiplier $\lambda$ must satisfy $\lambda \ge 0$. 
It can be shown that $\lambda = 0$ corresponds to the original Newton
step \refequ{step} after a line search (see more details in \refsec{unprojected_newton_step}).  
In other words, although we arrive \refequ{lagrangian_trust_region_newton_per_element} by assuming 
the constrained minimizer occurs on the trust-region boundary, 
\refequ{lagrangian_trust_region_newton_per_element} itself also covers the case where
the unconstrained minimizer is within the trust region.

\changed{Solving for \refequ{lagrangian_trust_region_newton_per_element}} gives a Newton step of the following form (up to a scalar folded into line search):
\begin{equation}  
  \begin{aligned} 
 \dx & = - (\mH + \lambda |\mH_e| )^{-1} \vg  \\
  & = - \left((1+\lambda) \left(\frac{1}{1+\lambda}\mH + \frac{\lambda}{1+\lambda} |\mH_e| \right) \right)^{-1} \vg
\end{aligned} 
\label{equ:descent_direction_hybrid_per_element}
\end{equation}  

With a change of variable $w = \frac{\lambda}{1+\lambda}$ (and folding $(1+\lambda)$ into the line search), we can see that the step is a linear combination of the unprojected Newton step and the per-element absolute eigenvalue projected Newton step:
\begin{equation}  
  \begin{aligned} 
    \dx = - ((1-w) \mH + w |\mH_e| )^{-1} \vg
  \end{aligned} 
\label{equ:descent_direction_hybrid_per_element_weighted}
\end{equation}  

As the Lagrangian multiplier $\lambda$ must satisfy $\lambda \ge 0$, the weight $w$ must satisfy $0 \le w \le 1$.
To ensure the resulting Newton step is a descent direction, we also require $(1-w) \mH + w |\mH_e|$ to be positive definite.

Our derivation above can be \changed{interpreted} as a generalized trust-region
Newton method~\cite{num_opt2006, Boyd_Vandenberghe_2004} with the per-element
extension, equality constraint enforcement and an additional line search, 
\changed{or a regularized Newton's method~\cite{regularized_newton_2014,regularized_newton_2015,regularized_newton_2023} regularized by a generalized trust region.} 
These extensions retain the same computational cost as the original
Newton iteration while benefiting from the stability of the
trust-region based method.

\begin{figure}[t]
  \centering
  \includegraphics[width=\linewidth]{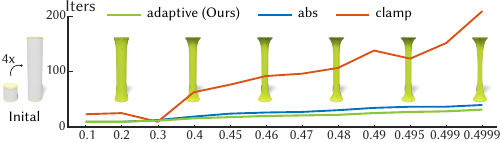}
  \caption{Here, the cylinder is stretched to 4$\times$ of its rest length.
  We increases the cylinder's Poisson's ratio (x-axis) and test the number of iterations
  (y-axis) needed for convergence in our method and the comparing methods.
  Across the entire range of Poisson's ratios, out method consistently outperforms
  others.
  } 
  \label{fig:different_pr}
  \vspace{-2mm}
\end{figure}
\subsection{Generalized Trust-region Newton Step}

Perhaps surprisingly, different choices of $w$ in
\refequ{descent_direction_hybrid_per_element_weighted} lead to different
pre-existing eigenvalue filtering strategies.
Here, we enumerate three typical choices of $w$ and their corresponding eigenvalue filtering strategies:

\subsubsection{Unprojected Newton step}\label{sec:unprojected_newton_step}

When $w = 0$, i.e., $\lambda = 0$, and the positive-definiteness constraint is satisfied, the step reduces to the traditional (unprojected) Newton step:
\begin{equation}  
  \begin{aligned} 
 \dx = - \mH^{-1} \vg
\end{aligned} 
\label{equ:descent_direction_original}
\end{equation}

\subsubsection{Projected Newton with eigenvalue clamping \cite{irving05}}
\label{sec:clamped_newton_step}

\begin{lemma}
  \changed{Let $\mA_e^+$ be the matrix obtained by performing eigenvalue clamping (with threshold 0) of each $\mA_i$, i.e., $ \mA_e^+ = \sum_i \mP_i^{\top} \mA_i^+ \mP_i $, where $\mA_i^+$ are the matrix obtained by clamping each eigenvalue of $\mA_i$ to $0$.
  Then it holds that $ \dx^{\top} \mA \dx + \dx^{\top} |\mA_e| \dx = 2 \dx^{\top} \mA_e^+ \dx $. }
  \label{lem:hybrid_sum}
\end{lemma}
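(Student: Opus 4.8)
The plan is to reduce the claimed global identity to an element-wise matrix identity, and then to a scalar identity on individual eigenvalues. The essential observation is that all three quadratic forms appearing in the statement are assembled additively from per-element contributions through the \emph{same} selection matrices $\mP_i$, and that both $|\mA_i|$ and $\mA_i^+$ are obtained from $\mA_i$ by acting only on its eigenvalues, so that $\mA_i$, $|\mA_i|$, and $\mA_i^+$ share a common eigenbasis.

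First I would exploit the assembly structure. Taking the standing decomposition $\mA = \sum_i \mP_i^{\top} \mA_i \mP_i$ (the same one underlying the definitions of $|\mA_e|$ and $\mA_e^+$), and writing the per-element displacement $\dx_i = \mP_i \dx$, one gets $\dx^{\top}\mA\dx = \sum_i \dx_i^{\top}\mA_i\dx_i$, and likewise $\dx^{\top}|\mA_e|\dx = \sum_i \dx_i^{\top}|\mA_i|\dx_i$ and $\dx^{\top}\mA_e^+\dx = \sum_i \dx_i^{\top}\mA_i^+\dx_i$. Hence it suffices to prove, for each $i$, the per-element identity $\dx_i^{\top}\mA_i\dx_i + \dx_i^{\top}|\mA_i|\dx_i = 2\,\dx_i^{\top}\mA_i^+\dx_i$, and since $\dx_i$ is arbitrary, this in turn follows from the matrix identity $\mA_i + |\mA_i| = 2\,\mA_i^+$.

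Next I would establish this matrix identity through the eigendecomposition. Writing $\mA_i = \mQ_i \Lambda_i \mQ_i^{\top}$ with $\mQ_i$ orthogonal and $\Lambda_i$ diagonal, the definitions of the two projections give $|\mA_i| = \mQ_i |\Lambda_i| \mQ_i^{\top}$ and $\mA_i^+ = \mQ_i \Lambda_i^+ \mQ_i^{\top}$, where $|\Lambda_i|$ and $\Lambda_i^+$ are the entrywise absolute value and positive part of $\Lambda_i$ (recall the clamping threshold here is $0$, so $\lambda_k^+ = \max(\lambda_k,0)$). Because all three matrices are conjugated by the same $\mQ_i$, the identity collapses to the diagonal statement $\Lambda_i + |\Lambda_i| = 2\,\Lambda_i^+$, i.e. to the elementary scalar fact $\lambda + |\lambda| = 2\max(\lambda,0)$ for each eigenvalue, which I would check by the two cases $\lambda \ge 0$ (both sides equal $2\lambda$) and $\lambda < 0$ (both sides equal $0$).

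The argument is almost entirely bookkeeping, and I do not anticipate a genuine obstacle. The one point requiring care is the shared-eigenbasis observation: the target is a true \emph{matrix} identity, not merely a spectral coincidence, so I must make explicit that $\mA_i$, $|\mA_i|$, and $\mA_i^+$ are \emph{simultaneously} diagonalized by $\mQ_i$. This is precisely what allows the scalar relation $\lambda + |\lambda| = 2\max(\lambda,0)$ to lift to the matrix level, and from there, via the common additive assembly over elements, to the global quadratic forms in the statement.
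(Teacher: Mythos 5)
Your proposal is correct and follows essentially the same route as the paper's proof: both reduce the identity, via the additive assembly $\mA = \sum_i \mP_i^{\top}\mA_i\mP_i$ and the per-element eigendecomposition, to the scalar fact $\lambda + |\lambda| = 2\max(\lambda,0)$ applied to each eigenvalue $\lambda_{ik}$. The only difference is organizational—you package the key step as the matrix identity $\mA_i + |\mA_i| = 2\mA_i^+$ before contracting with $\dx$, whereas the paper carries out the same cancellation directly inside the expanded quadratic forms.
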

  
\begin{proof} 
See \refapp{proof_lemma_3}.
\end{proof}
When $w = 0.5$, i.e., $\lambda = 1$, using \reflem{hybrid_sum}, the step reduces to the projected Newton step with eigenvalue clamping:
\begin{equation}  
  \begin{aligned} 
    \dx & = - (\frac{1}{2}\mH + \frac{1}{2}|\mH_e| )^{-1} \vg \\
    & = - (\mH_e^+)^{-1} \vg
\end{aligned} 
\label{equ:descent_direction_clamp_per_element}
\end{equation}

\subsubsection{Projected Newton with absolute eigenvalue projection \cite{chen2024abs_eig}}
\label{sec:abs_newton_step}

When $w = 1$, i.e., $\lambda \rightarrow \infty $, the step reduces to the projected Newton step with absolute eigenvalue projection:
\begin{equation}  
  \begin{aligned} 
 \dx = - | \mH_e |^{-1} \vg
\end{aligned} 
\label{equ:descent_direction_abs_per_element}
\end{equation}

Note that is also equivalent to using the first-order model and quadratic constraint with Hessian distance measure:
\begin{equation}  
  \begin{aligned} 
\underset{\dx}{\min } \quad & f(\vx)+\vg^\top \dx \\
\text {s. t.} \quad & \dx^{\top}  |\mH_e| \dx \leq \Delta .
\end{aligned} 
\label{equ:trust_region_newton_abs_per_element}
\end{equation}  
which, following a derivation similar to \cite{dauphin2014saddle_free}, leads to a step in the form \refequ{descent_direction_abs_per_element}.

\begin{figure}[t]
  \centering
  \includegraphics[width=\linewidth]{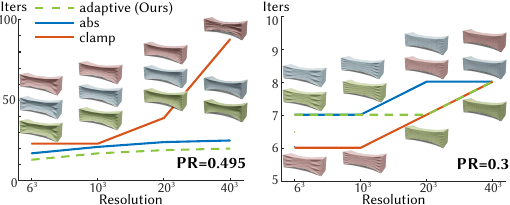}
  \caption{We plot the number of iterations needed for convergences with respect to 
  different mesh resolutions.
  Note that in the simple case (bottom plot), our method may takes one more Newton iteration than eigenvalue clamping, 
  simply because our method always starts from the absolute eigenvalue
  projection in its first Newton iteration.} 
  \label{fig:different_resolution}
  \vspace{-1mm}
\end{figure}

\subsection{Adaptive Per-element Projection}
\label{sec:adaptive_hybrid_projection}

Now, how should we choose the weight $w$ in \refequ{descent_direction_hybrid_per_element_weighted} to achieve stable energy descent?
To address this question, 
we seek inspiration from how the trust-region size is adjusted in trust-region method literature~\cite{num_opt2006}.
A typical strategy is to increase the trust region size when original function is well-approximated by the quadratic form, and decrease the trust region size otherwise.

In light of this, we choose $w$ based on how well the quadratic form (in \refequ{trust_region_newton_hybrid_per_element}) approximates the original energy function $f$:
\begin{enumerate}
\item If the quadratic form approximates the original energy well, then it is safe to allow a large trust-region size.
In the most extreme case where an infinite trust-region size is used, we can just remove the trust-region constraint in \refequ{trust_region_newton_hybrid_per_element}, which corresponds to
the unprojected Netwon step in \refsec{unprojected_newton_step} (i.e., $w=0$). However, to apply $w=0$, we need to ensure the global Hessian to be PSD. Therefore, we slightly 
relax it and apply $w=0.5$, which corresponds to eigenvalue \changed{clamping}\textemdash the strategy that ensures a PSD Hessian.

\item If the quadratic fitting quality is inadequate, then a small trust region size should be used.
In the most extreme case where a near-zero trust region size is used, we will have $\lambda \rightarrow \infty$ and thus the approximation model falls back to the first-order model, which corresponds to the absolute eigenvalue projection strategy.
This ensures sufficient regularization along the negative eigenvalue directions when the quadratic approximation is far from the original function.
\end{enumerate}

Therefore, we look for a ratio to quantify the agreement between the quadratic model and the original function, and adaptively adjust the weight $w$ based on this ratio.
In the trust region literature \cite{Sorensen1982newton,Conn2000TrustRegion}, a classic choice to measure this agreement is to use the ratio 
\begin{equation}  
  \begin{aligned}
    \rho = \frac{f(\vx) - f(\vx + \dx)}{\tilde{f}(\vx) - \tilde{f}(\vx + \dx)}
  \end{aligned}
  \label{equ:trust_region_newton_ratio}
\end{equation}
where $\tilde{f}(\vx) = f(\vx) + \vg^\top \vx + \frac{1}{2} \vx^\top \mH \vx$ is the quadratic approximation of $f(\vx)$.
When $\rho$ is close to $1$, the quadratic model is a good approximation of the original function, and a large trust region size can be allowed.
When $\rho$ is far from $1$ (e.g., close to zero or even negative), the quadratic model cannot accurately estimate the local landscape of the original function, and a small trust region size should be used.

Inspired by this, we propose to use the ratio $\rho$ to adaptively adjust the weight $w$ in \refequ{descent_direction_hybrid_per_element_weighted} based on the quality of the quadratic approximation:
\begin{align} 
  w
  =
  \left\{\begin{matrix} 
    0.5, & |\rho - 1| \le \epsilon \\
    1, & |\rho - 1| > \epsilon
  \end{matrix}\right. ,
  \label{equ:adaptive_weight}
\end{align}
where $\epsilon$ is a small positive constant between $0.01$ and $0.1$. 

\changed{
Our discrete $w$ is a practical design choice for computational efficiency.
Computing the optimal $w$ from \refequ{trust_region_newton_hybrid_per_element} would require solving a trust-region subproblem per Newton iteration, which involves an expensive iterative solve (see Sec 4.3 of \cite{num_opt2006}). 
Therefore, instead of solving the exact trust-region subproblems, we opt to focus on a discrete set of classical $w$ and pick one based on the trust region ratio.
}

In other words, let $\Lambda$ be the eigenvalues of the per-element Hessian, we \changed{perform} the following element-wise projection: 
\begin{align}
  \Lambda^+
  =
  \left\{\begin{matrix} 
    \max(\Lambda, 0), & |\rho - 1| \le \epsilon \\
    |\Lambda|, & |\rho - 1| > \epsilon
  \end{matrix}\right. ,
  \label{equ:adaptive_eigenvalue_projection}
\end{align}
Our method adaptively chooses between the absolute eigenvalue projection and eigenvalue clamping strategy based on the current energy landscape, ensuring sufficient regularization in the highly nonconvex regime while maintaining fast convergence in the (nearly) convex scenario.

For simplicity, we always start from the clamping strategy $|\Lambda|$ for the first Newton iteration, and then use the trust region ratio $\rho_k$ computed using $\vx_k$ and $\vx_{k-1}$ to pick the projection strategy for the $k$-th Newton iteration.

\begin{figure}[t]
  \centering
  \includegraphics[width=\linewidth]{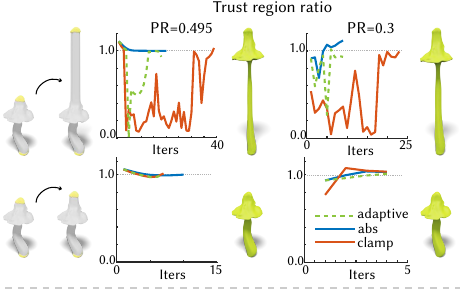}
  \includegraphics[width=\linewidth]{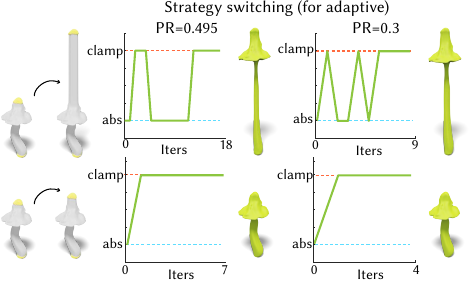}
  \vspace{-1mm}
  \caption{We visualize the trust-region ratio $\rho$ \changed{(for all strategies) and strategy switching trend (for adaptive)} during the optimization
  under different deformations and Poisson's ratios.  While the trust region
  ratio can change drastically in the early stage of optimization for the highly
  nonconvex energy landscape\textemdash e.g., \changed{large deformation and high Poisson's ratio}, it 
  usually approaches $1$ near convergence. 
  \changed{Thus the adaptive strategy often starts from abs strategy for the first few iterations, alternates between abs and clamp during the optimization, and falls back to clamp near convergence. 
  On the contrary, solely using the abs strategy may excessively damp the descent direction when the quadratic approximation is still acceptable (e.g., small PR or volume change) or later becomes good (e.g., near convergence). }
  } 
  \label{fig:trust_region_ratio_mushroom_stretch}
\end{figure}

\emph{Implementation.}
Our method requires only two lines of code change in the existing Projected Newton framework with line search.
For completeness, we provide the full algorithm of our adaptive eigenvalue projection as below:

\begin{minipage}[10pt]{0.95\linewidth} 
  \begin{lstlisting}[language=Pseudocode,mathescape=true,label={lst:our_method}]
  initialize H_proj to empty sparse matrix
  # compute the trust region ratio
  rho = (f(x_prev)-f(x))/(f_tilde(x_prev)-f_tilde(x))
  foreach element i:
     # Pi*x selects element i's local variables from x
     either:
        Hi = constructLocalHessian(i,Pi*x)
        $\Lambda$,U = eig(Hi)
     or:
        $\Lambda$,U = analyticDecomposition(i,Pi*x)
     # our adaptive eigenvalue projection
     $\Lambda$ = (|rho - 1|<epsilon) ? max($\Lambda$,0) : abs($\Lambda$)
     # compute the projected local Hessian
     Hi_proj = U*$\Lambda$*U'
     # accumulate into global Hessian
     H_proj += Pi*Hi_proj*Pi'\end{lstlisting}
\end{minipage}

\begin{figure}[t]
  \centering
  \includegraphics[width=\linewidth]{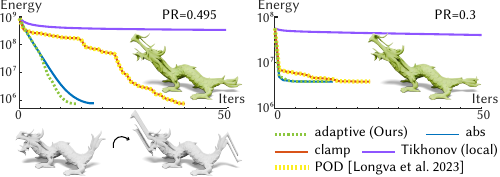}
  \caption{We compare the energy convergence curve in our approach, the Projection-On-Demand (POD)~\cite{longva2023pitfalls} and per-element Tikhonov regularization.
  We also note that, separate from the optimization convergence performance, POD may require additional Cholesky decompositions to check
  the positive definiteness of the Hessian.}
  \label{fig:pod_dragon}
\end{figure}

\section{Results}
\label{sec:result}

\begin{figure}
  \centering
  \includegraphics[width=\linewidth]{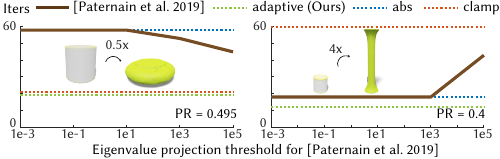}
  \caption[width=0.5\linewidth]{We compare our method to a per-element version of the eigenvalue
  projection scheme in \cite{Paternain2019nonconvex_newton}.  Due to 
  the large variation across local elements' eigenvalue magnitudes
  (under different deformations and Poisson's ratios), 
  it is unclear how to set an eigenvalue projection threshold for \cite{Paternain2019nonconvex_newton}.
  Here, we sweep through a range of threshold values (x-axis) and plot the needed 
  number of iterations in each setting. Our method is consistently 
  more efficient.
  }
  \label{fig:different_hybrid_eps}
\end{figure}

We evaluate our method on a diverse set of examples, including different
deformations, physical parameters, geometries, mesh resolutions, and elastic
energies.
Furthermore, we perform extensive experiments on the TetWild Thingi10k
dataset~\cite{Thingi10K,Hu:2018:TMW:3197517.3201353} to demonstrate our speedup
over existing Hessian projection schemes (\reffig{histogram_thingi10k_large}
and \reffig{histogram_thingi10k_small}).

We implement our method in C++ using the TinyAD library \cite{schmidt2022tinyad} for automatic differentiation, and perform our experiments using a MacBook Pro with an Apple M2 processor and 24GB of RAM.

In our hyperelastic simulation, we use the stable Neo-Hookean energy
\cite{smith2018snh,kim22deformables} with a Young's Modulus of $10^8$ and
different Poisson's ratios.
We run the Projected Newton's method for a maximum of $200$ iterations with a
convergence criteria of $10^{-5}$ on the Newton decrement $0.5 \dx^\top\vg$.
Since the mesh's initial large deformation may invert some finite elements,
we perform a backtracking line search without inversion check to find a suitable
step size\changed{, iteratively shrinking the step size by 0.8 (starting from step size 1) until Armijo condition is satisfied. }
We set the trust-region ratio threshold $\epsilon$ in \refequ{adaptive_eigenvalue_projection} to be $0.01$ for all the examples (except compression for which $\epsilon$ is set to be $0.1$).

\paragraph{Convergence and Stability. }
Our method achieves consistent speedup over existing eigenvalue projection schemes across 
a wide range of deformations (\reffig{stretch_compress} and \reffig{bend_twist}), 
Poisson's ratios (\reffig{different_pr}), mesh resolutions (\reffig{different_resolution}), 
tetrahedralizations (\reffig{mesh_quality}), and 
hyperelastic models (\reffig{different_energy}).
Our approach adaptively \changed{chooses} between
absolute eigenvalue projection and clamping at each iteration.
In the case where a specific eigenvalue projection strategy performs dominantly
better than the other throughout the optimization, our adaptive approach
performs at least comparable to the best-performing eigenvalue projection
strategy.
Note that in the case 
where eigenvalue clamping performs
consistently better than absolute eigenvalue projection, our adaptive
approach can takes one more Newton iteration than eigenvalue clamping
(\reffig{different_resolution} bottom).
This is just because we always start from the absolute eigenvalue projection in the
first Newton iteration. 
Our trust-region based strategy also enables us to obtain more effective descent directions that require much fewer line search iterations to find a suitable step size (\reffig{line_search_iter}).
\changed{We provide the statistics of the trust-region ratio trend, line search iterations, and timings in \reffig{trust_region_ratio_mushroom_stretch}, \reftbl{line_search_iter} and \reftbl{timings}, respectively. }

\begin{table}[t]
  \caption{
    \changed{
    We report the \textbf{average} line search iterations per Newton iteration of the teaser (\reffig{teaser}) and the horse (\reffig{stretch_compress}) example.
    Our adaptive method maintains low line search iterations across different Poisson's ratios and deformation sizes.}
  } 
  \vspace{-2mm}
  \begin{tabular}{ccccccc}
  \whline{1pt}
  \multicolumn{7}{c}{large deformation}                                                                                                                                                      \\ \whline{0.7pt}
  \multicolumn{1}{c|}{}           & \multicolumn{3}{c|}{teaser: \reffig{teaser} (top)}                                     & \multicolumn{3}{c}{horse: \reffig{stretch_compress} (top)}        \\ \cline{2-7}
  \multicolumn{1}{c|}{}           & \multicolumn{1}{c|}{\cellcolor[HTML]{E4FFB1}adaptive} & \multicolumn{1}{c|}{\cellcolor[HTML]{D6FFFD}abs} & \multicolumn{1}{c|}{\cellcolor[HTML]{FFEBE7}clamp} & \multicolumn{1}{c|}{\cellcolor[HTML]{E4FFB1}adaptive} & \multicolumn{1}{c|}{\cellcolor[HTML]{D6FFFD}abs} & {\cellcolor[HTML]{FFEBE7}clamp} \\ \whline{0.5pt}
  \multicolumn{1}{c|}{PR = 0.495} & \multicolumn{1}{c|}{\cellcolor[HTML]{E4FFB1}1.0}      & \multicolumn{1}{c|}{\cellcolor[HTML]{D6FFFD}1.0} & \multicolumn{1}{c|}{\cellcolor[HTML]{FFEBE7}7.5}   & \multicolumn{1}{c|}{\cellcolor[HTML]{E4FFB1}1.8}      & \multicolumn{1}{c|}{\cellcolor[HTML]{D6FFFD}1.0} & {\cellcolor[HTML]{FFEBE7}7.4}   \\
  \multicolumn{1}{c|}{PR = 0.3}   & \multicolumn{1}{c|}{\cellcolor[HTML]{E4FFB1}1.0}      & \multicolumn{1}{c|}{\cellcolor[HTML]{D6FFFD}1.0} & \multicolumn{1}{c|}{\cellcolor[HTML]{FFEBE7}4.0}   & \multicolumn{1}{c|}{\cellcolor[HTML]{E4FFB1}1.5}      & \multicolumn{1}{c|}{\cellcolor[HTML]{D6FFFD}1.0} & {\cellcolor[HTML]{FFEBE7}5.0}  \\ \whline{1pt}
  \multicolumn{7}{c}{small deformation}                                                                                                                                                      \\ \whline{0.7pt}
  \multicolumn{1}{c|}{}           & \multicolumn{3}{c|}{teaser: \reffig{teaser} (bottom)}                                  & \multicolumn{3}{c}{horse: \reffig{stretch_compress} (bottom)}     \\ \cline{2-7}
  \multicolumn{1}{c|}{}           & \multicolumn{1}{c|}{\cellcolor[HTML]{E4FFB1}adaptive} & \multicolumn{1}{c|}{\cellcolor[HTML]{D6FFFD}abs} & \multicolumn{1}{c|}{\cellcolor[HTML]{FFEBE7}clamp} & \multicolumn{1}{c|}{\cellcolor[HTML]{E4FFB1}adaptive} & \multicolumn{1}{c|}{\cellcolor[HTML]{D6FFFD}abs} & {\cellcolor[HTML]{FFEBE7}clamp} \\ \whline{0.5pt}
  \multicolumn{1}{c|}{PR = 0.495} & \multicolumn{1}{c|}{\cellcolor[HTML]{E4FFB1}1.4}      & \multicolumn{1}{c|}{\cellcolor[HTML]{D6FFFD}1.0} & \multicolumn{1}{c|}{\cellcolor[HTML]{FFEBE7}4.1}   & \multicolumn{1}{c|}{\cellcolor[HTML]{E4FFB1}1.0}      & \multicolumn{1}{c|}{\cellcolor[HTML]{D6FFFD}1.0} & {\cellcolor[HTML]{FFEBE7}1.0}   \\
  \multicolumn{1}{c|}{PR = 0.3}   & \multicolumn{1}{c|}{\cellcolor[HTML]{E4FFB1}1.0}      & \multicolumn{1}{c|}{\cellcolor[HTML]{D6FFFD}1.0} & \multicolumn{1}{c|}{\cellcolor[HTML]{FFEBE7}1.0}   & \multicolumn{1}{c|}{\cellcolor[HTML]{E4FFB1}1.0}      & \multicolumn{1}{c|}{\cellcolor[HTML]{D6FFFD}1.0} & {\cellcolor[HTML]{FFEBE7}1.0}   \\ \whline{1pt}
  \end{tabular}
  \label{tbl:line_search_iter}
\end{table}

\begin{table}[t]
  \caption{
    \changed{
    We measure the average per-iteration cost (including all PR and deformation combinations) in wall-clock time on the teaser example ($12k$ vertices, $44k$ tetrahedrons).
    Here we provide the statistics of the average time per Newton iteration, and the percentage of time spent on computing the Newton direction (including the linear solve), line search and (optionally) computing the trust-region (TR) ratio for one Newton iteration.}
  } 
  \vspace{-2mm}
  \begin{tabular}{c|
  >{\columncolor[HTML]{E4FFB1}}c| 
  >{\columncolor[HTML]{D6FFFD}}c| 
  >{\columncolor[HTML]{FFEBE7}}c }
                                                                              & adaptive                                                   & abs                                                        & clamp                                                      \\ \whline{1pt}
  \begin{tabular}[c]{@{}c@{}}Newton \\direction\end{tabular} & \begin{tabular}[c]{@{}c@{}}88.6\% \\ (solve: 76.2\%)\end{tabular} & \begin{tabular}[c]{@{}c@{}}94.6\% \\ (solve: 81.7\%)\end{tabular} & \begin{tabular}[c]{@{}c@{}}80.5\% \\ (solve: 69.4\%)\end{tabular} \\ \hline
  line search                                                                 & 5.4\%                                                      & 5.4\%                                                      & 19.5\%                                                     \\ \hline
  TR ratio                                                          & 6.0\%                                                      & /                                                          & /                                                          \\ \whline{0.75pt}
  \begin{tabular}[c]{@{}c@{}}avg time per \\ Newton iter \end{tabular}   & 0.117 sec                                                    & 0.107 sec                                                     & 0.126 sec                                                      \\ 
\end{tabular}
\label{tbl:timings}
\end{table}

\paragraph{Generality.}
In \reffig{histogram_thingi10k_large} and \reffig{histogram_thingi10k_small}, we
provide the statistics of our speedup over \cite{irving05} and
\cite{chen2024abs_eig} under diverse deformations on 254 high-resolution,
closed, genus-0 tetrahedral meshes with more than 5000 vertices. These meshes are randomly 
chosen from the TetWild Thingi10k dataset~\cite{Thingi10K}.
Our method achieves significant speedup under high Poisson's ratios and large deformations, while still being comparable to the existing eigenvalue projection schemes under small deformations and low Poisson's ratios.

\paragraph{Comparison.} 
Last but not least, we compare our method with another two Hessian projection strategies,
namely, the Projection-On-Demand (POD) strategy~\cite{longva2023pitfalls} in \reffig{pod_dragon} and the per-element Tikhonov
regularization~\cite{Paternain2019nonconvex_newton} in \reffig{different_hybrid_eps}.
As the POD strategy is mostly designed for dynamic simulation where the Hessian
can be close to positive-definite \changed{given a small timestep}, its convergence is
similar to eigenvalue clamping in the quasistatic setting, but it suffers from additional
Cholesky decomposition cost for the positive-definiteness check.
Moreover, Tikhonov regularization~\cite{Paternain2019nonconvex_newton}  requires the eigendecomposition on the global Hessian matrix, 
which is computationally intractable for large meshes. Therefore, we compare our approach to the per-element version of
\cite{Paternain2019nonconvex_newton} (see \reffig{different_hybrid_eps}), where the
local element's eigenvalues are projected to their absolute values when large than a threshold
and clamped otherwise.
As the eigenvalues can have drastically different magnitudes under different
deformations and Poisson's ratios, it is challenging to set a universal
projection threshold for local eigenvalue projection in
\cite{Paternain2019nonconvex_newton}.

\begin{figure}[t]
  \centering
  \includegraphics[width=\linewidth]{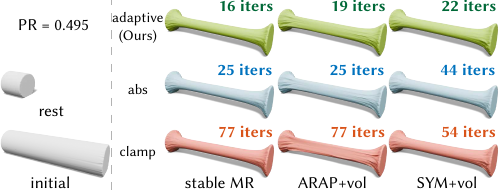}
\caption{Our adaptive strategy generalizes across different Neo-Hookean variants.
Here in the last three columns, we experiment with different strain energies, 
including Mooney-Rivlin, ARAP, and Symmetric Dirichlet energy, each augmented with a volume-preservation term $(J-1)^2$. } 
\label{fig:different_energy}
\end{figure}

\begin{figure}[t]
  \centering
  \includegraphics[width=\linewidth]{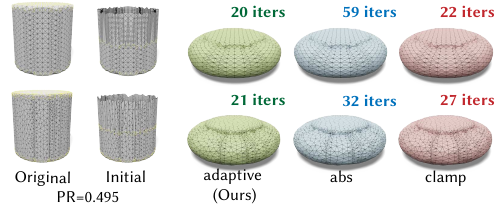}
  \caption[width=0.5\linewidth]{Our method is robust to different tetrahedralizations. 
  Here we tetrahedralize the same cylinder to two different tetrahedral meshes.
  The meshes on the top and bottom row have 6.1k and 6.6k vertices, respectively.
  In both settings, our method outperforms existing strategies.
  } 
  \label{fig:mesh_quality}
\end{figure}

  \vspace{-1mm}

  \vspace{-1mm}

\vspace{-2mm}
\section{CONCLUSION \& FUTURE WORK}
\label{sec:conclusion}

\begin{figure}[t]
  \centering
  \includegraphics[width=\linewidth]{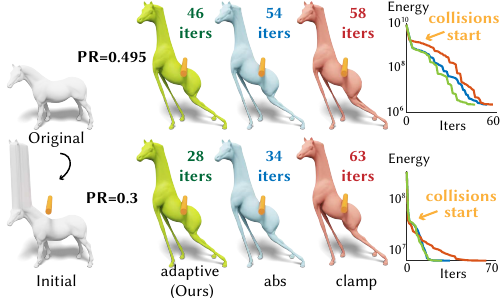}
  \caption{\changed{We perform a collision experiment using Incremental Potential
  Contact (IPC) \cite{Li2020IPC} by placing a cylinder (orange) above the
  back of a horse. 
  Note that IPC's intersection-aware line search dominates convergence and step size after collisions happen.
  Extending our framework to accelerate the convergence of barrier functions could be an interesting future direction. } } 
  \label{fig:horse_collision}
\end{figure}

We introduce a novel adaptive eigenvalue filtering strategy for Projected
Newton's method to stabilize and accelerate the minimization of Neo-Hookean
energy.
Our method is simple to implement and requires only two lines of code change in the Projected Newton framework, making it easy to integrate into existing simulation pipelines.
Our trust-region based framework opens up the possibility of analyzing different eigenvalue projection schemes while taking the quality of the quadratic approximation into account.

We primarily experiment on the quasistatic simulation of stable Neo-Hookean energy in this work.
Extending our framework to other hyperelastic models, collisions (\reffig{horse_collision}) and dynamic simulation, such as combining our work with \cite{longva2023pitfalls}, could be a promising direction.
\changed{
Our choice of $w$ eliminates the need of checking the positive-definiteness of the global Hessian, and is well-suited for quasistatic simulation where the energy is highly nonconvex.
For other more convex scenarios (e.g., dynamic simulation with a small timestep), the global Hessian can sometimes be \emph{serendipitously} positive definite even if some local Hessians are indefinite, in which case setting $w \in [0,0.5]$ could potentially lead to faster convergence \cite{longva2023pitfalls}.}

\changed{Exploration of having $w$ as a continuous function in $[0,1]$ for more fine-grained control} could be another interesting future direction.
\changed{Computing the trust-region ratio and projection strategy independently \emph{for each element} could potentially further improve the convergence.}
Our method requires picking a threshold $\epsilon$ for the trust region ratio.
Further investigation on the choice of this threshold could be beneficial, especially for the case of small compression (see \reffig{histogram_thingi10k_small}).
Our approach always starts from the absolute eigenvalue projection for the first Newton iteration.
Deriving a better strategy for the initial eigenvalue projection could potentially further improve the performance.

\vspace{-2pt}
\begin{acks}
This work is funded in part by two NSERC Discovery grants, the Ontario Early Research Award program, the Canada Research Chairs Program, a Sloan Research Fellowship, the DSI Catalyst Grant program, SoftBank and gifts by Adobe Research and Autodesk.
We thank Silvia Sellán, Otman Benchekroun, Abhishek Madan for help with the rendering of the teaser, all the artists for sharing the 3D models and anonymous reviewers for their helpful comments and suggestions.
\end{acks}

\begin{figure*}[t]
  \includegraphics[width=\linewidth]{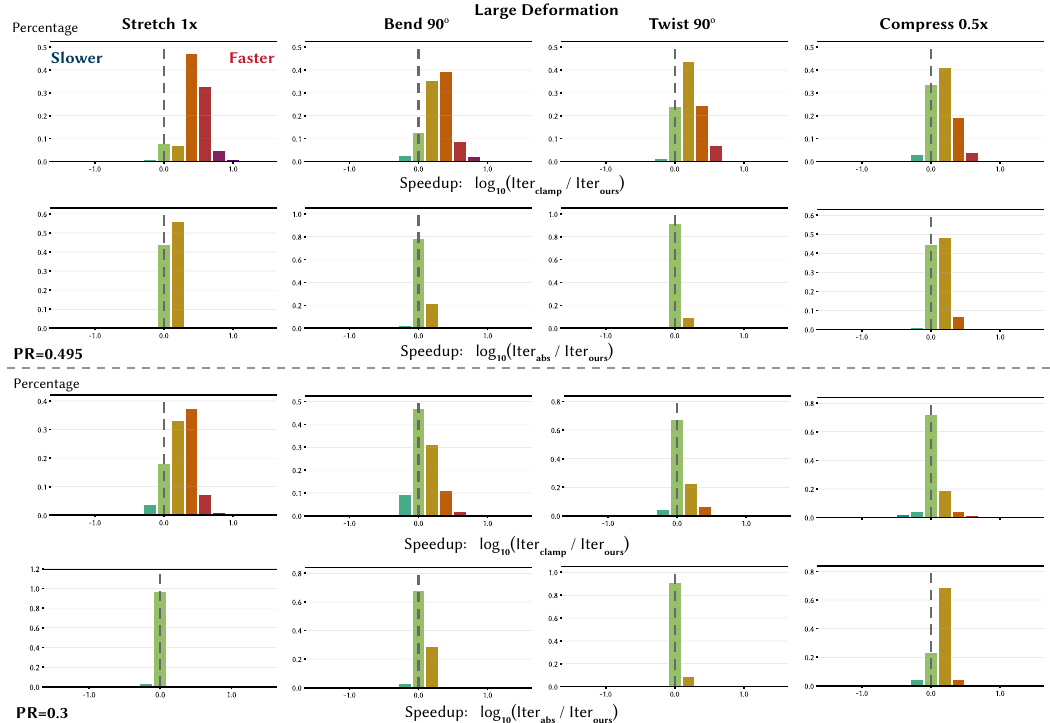}
  \caption{
    Histogram of the speedup of our adaptive method over the absolute 
    eigenvalue projection~\cite{chen2024abs_eig} and eigenvalue clamping~\cite{irving05}
    on TetWild Thingi10k dataset. Here, we test with a diverse set of \emph{large} deformations (shown in each column).
    First row: speedups over eigenvalue clamping with a Poisson's ratio 0.495.
    Second row: speedups over absolute eigenvalue projection with a Poisson's ratio 0.495.
    Third row: speedups over eigenvalue clamping with a Poisson's ratio 0.3.
    Fourth row: speedups over absolute eigenvalue projection with a Poisson's ratio 0.3.
    } 
  \label{fig:histogram_thingi10k_large}
\end{figure*}

\begin{figure*}[t]
  \centering
  \includegraphics[width=\linewidth]{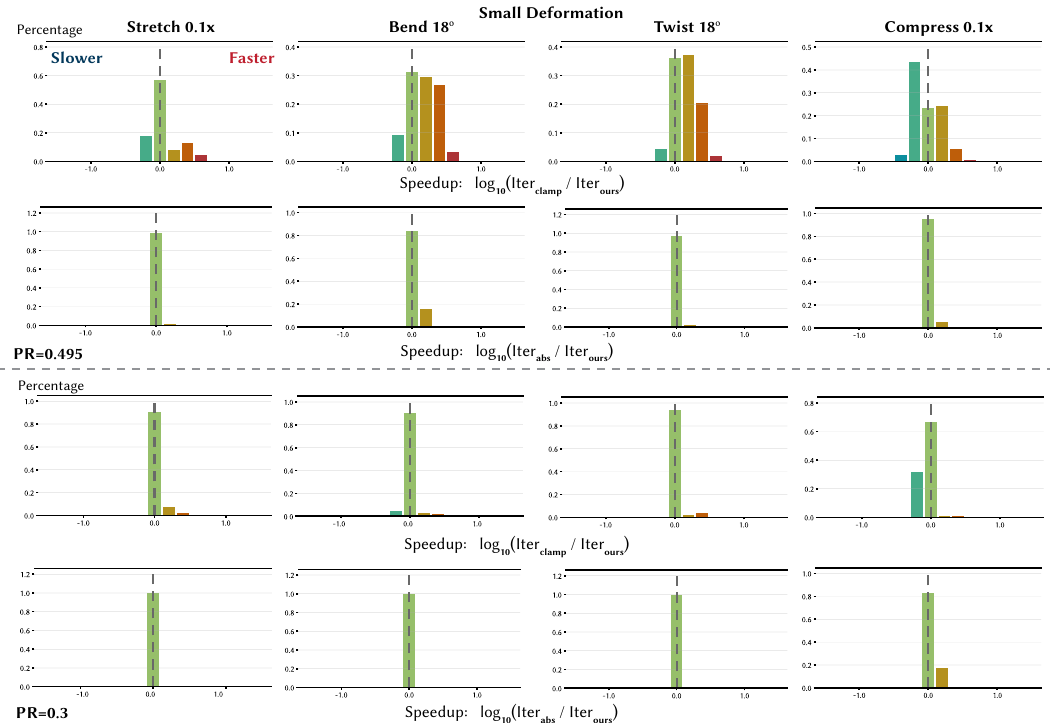}
  \caption{
Histogram of the speedup of our adaptive method in settings
similar to \reffig{histogram_thingi10k_large} but with much \emph{smaller} deformations.
Even in these simple scenarios, our method outperforms eigenvalue clamping under 
a large Poisson's ratio, and performs comparably in other cases.} 
  \label{fig:histogram_thingi10k_small}
\end{figure*}

\bibliographystyle{ACM-Reference-Format}
\bibliography{sections/reference.bib}

\newpage

\appendix
\section{Proof of Lemma 4.1} \label{app:proof_lemma_1}
\begin{lemma}
  Let $|\mA_e|$ be the matrix obtained by performing per-element absolute eigenvalue projection of $\mA$, i.e., $ |\mA_e| = \sum_i \mP_i^{\top} |\mA_i| \mP_i $, where $|\mA_i|$ are the matrix obtained by taking the absolute value of each of the eigenvalues of $\mA_i$.
  Then it holds that $\left|\vx^{\top} \mA \vx\right|  \leq \vx^{\top} |\mA_e| \vx $.
  \label{lem:abs_per_element_constraint}
\end{lemma}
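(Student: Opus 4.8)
The plan is to reduce the assembled (global) inequality to a collection of per-element inequalities by exploiting the selection-matrix structure, and then to prove each per-element inequality with the spectral decomposition together with the triangle inequality.

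First I would introduce local coordinates $\vx_i = \mP_i \vx$ for each element $i$. The assembly identity $\mA = \sum_i \mP_i^\top \mA_i \mP_i$ then gives $\vx^\top \mA \vx = \sum_i \vx_i^\top \mA_i \vx_i$, and by the definition of $|\mA_e|$ we similarly have $\vx^\top |\mA_e| \vx = \sum_i \vx_i^\top |\mA_i| \vx_i$. This reduces the claim to showing $\bigl| \sum_i \vx_i^\top \mA_i \vx_i \bigr| \le \sum_i \vx_i^\top |\mA_i| \vx_i$.

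Second, I would establish the single-element bound $|\vx_i^\top \mA_i \vx_i| \le \vx_i^\top |\mA_i| \vx_i$. Since $\mA_i$ is symmetric, write its eigendecomposition $\mA_i = \mU_i \Lambda_i \mU_i^\top$ with $\Lambda_i = \mathrm{diag}(\lambda_{i,1}, \ldots)$, so that $|\mA_i| = \mU_i |\Lambda_i| \mU_i^\top$. Setting $\vy_i = \mU_i^\top \vx_i$, the two quadratic forms become $\vx_i^\top \mA_i \vx_i = \sum_k \lambda_{i,k}\, y_{i,k}^2$ and $\vx_i^\top |\mA_i| \vx_i = \sum_k |\lambda_{i,k}|\, y_{i,k}^2$. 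Because each coefficient $y_{i,k}^2 \ge 0$, the triangle inequality yields $|\sum_k \lambda_{i,k} y_{i,k}^2| \le \sum_k |\lambda_{i,k}| y_{i,k}^2$, which is exactly the per-element bound.

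Finally, I would apply the triangle inequality once more across the element sum, $|\sum_i \vx_i^\top \mA_i \vx_i| \le \sum_i |\vx_i^\top \mA_i \vx_i| \le \sum_i \vx_i^\top |\mA_i| \vx_i = \vx^\top |\mA_e| \vx$, completing the argument. I do not expect a genuine obstacle here: the statement follows from two nested applications of the triangle inequality. The only point requiring care is the bookkeeping\textemdash correctly passing from the global quadratic forms to the per-element ones via $\vx_i = \mP_i \vx$, and keeping the nonnegativity of the squared spectral coefficients $y_{i,k}^2$ explicit so that the inner triangle inequality applies term by term.
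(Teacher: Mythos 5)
Your proposal is correct and follows essentially the same route as the paper's own proof: both expand the global quadratic form into per-element contributions via $\mP_i$, diagonalize each $\mA_i$ to write the quadratic forms as sums of $\lambda_{ik}$ weighted by nonnegative squared coefficients, and conclude with the triangle inequality. The only cosmetic difference is that you apply the triangle inequality in two nested stages (within each element, then across elements) while the paper applies it once over the combined double sum $\sum_i \sum_k$; the content is identical.
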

  
\begin{equation}  
  \begin{aligned} 
\left|\vx^{\top} \mA \vx\right| & = \left|\vx^{\top} \left( \sum_i \mP_i^{\top} \mA_i \mP_i \right) \vx\right| \\
& =  \left| \sum_i (\mP_i \vx)^{\top} \mA_i (\mP_i \vx) \right| \\
& = \left| \sum_i \sum_k \left( (\mP_i \vx)^{\top} \ve_{ik} \right) \lambda_{ik} \left( \ve_{ik}^{\top} (\mP_i \vx) \right) \right| \\
& = \left| \sum_i \sum_k \left( (\mP_i \vx)^{\top} \ve_{ik} \right) \lambda_{ik} \left( \ve_{ik}^{\top} (\mP_i \vx) \right) \right| \\
& = \left| \sum_i \sum_k \lambda_{ik} \left( (\mP_i \vx)^{\top} \ve_{ik} \right)^2 \right| \\
& \leq  \sum_i \sum_k \left| \lambda_{ik} \right| \left( (\mP_i \vx)^{\top} \ve_{ik} \right)^2 \\
\nonumber
\end{aligned} 
\end{equation} 

\begin{equation}  
  \begin{aligned} 
& =\sum_i \sum_k \left( (\mP_i \vx)^{\top} \ve_{ik} \right)  \left| \lambda_{ik} \right| \left( \ve_{ik}^{\top} (\mP_i \vx) \right)  \\
& =  \vx^{\top} \left( \sum_i \mP_i^{\top} |\mA_i| \mP_i \right) \vx \\
& = \vx^{\top} |\mA_e| \vx
\nonumber
\end{aligned} 
\end{equation} 
\qed

\section{Proof of Lemma 4.2} \label{app:proof_lemma_3}

\begin{lemma}
  $ \dx^{\top} \mH \dx + \dx^{\top} |\mH_e| \dx = 2 \dx^{\top} \mH_e^+ \dx $.
  \label{lem:hybrid_sum}
\end{lemma}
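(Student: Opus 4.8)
The plan is to reduce the claimed matrix identity to an elementary scalar identity on the eigenvalues of each per-element Hessian, exploiting that $\mH$, $|\mH_e|$, and $\mH_e^+$ are all assembled from the same per-element blocks $\mH_i$ through the same selection matrices $\mP_i$. First I would expand all three quadratic forms as sums over elements,
\begin{equation}
\begin{aligned}
\dx^\top \mH \dx &= \sum_i (\mP_i \dx)^\top \mH_i (\mP_i \dx), \\
\dx^\top |\mH_e| \dx &= \sum_i (\mP_i \dx)^\top |\mH_i| (\mP_i \dx), \\
\dx^\top \mH_e^+ \dx &= \sum_i (\mP_i \dx)^\top \mH_i^+ (\mP_i \dx),
\nonumber
\end{aligned}
\end{equation}
so that it suffices to establish the per-element matrix identity $\mH_i + |\mH_i| = 2\mH_i^+$ for every $i$ and then form the quadratic form with $\mP_i \dx$ and sum.

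The crucial observation is that $|\mH_i|$ and $\mH_i^+$ are, by their very definition, built from the \emph{same} eigenbasis as $\mH_i$: writing the symmetric eigendecomposition $\mH_i = \sum_k \lambda_{ik}\, \ve_{ik} \ve_{ik}^\top$, we have $|\mH_i| = \sum_k |\lambda_{ik}|\, \ve_{ik} \ve_{ik}^\top$ and $\mH_i^+ = \sum_k \max(\lambda_{ik}, 0)\, \ve_{ik} \ve_{ik}^\top$. Hence all three matrices are simultaneously diagonalized by $\{\ve_{ik}\}$, and the matrix identity collapses onto its spectrum, reducing to the scalar identity $\lambda + |\lambda| = 2\max(\lambda, 0)$ applied eigenvalue by eigenvalue.

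I would then verify this scalar identity by cases: if $\lambda \ge 0$ then $\lambda + |\lambda| = 2\lambda = 2\max(\lambda, 0)$, and if $\lambda < 0$ then $\lambda + |\lambda| = \lambda - \lambda = 0 = 2\max(\lambda, 0)$. Summing $(\lambda_{ik} + |\lambda_{ik}|)\, \ve_{ik}\ve_{ik}^\top = 2\max(\lambda_{ik}, 0)\, \ve_{ik}\ve_{ik}^\top$ over $k$ yields $\mH_i + |\mH_i| = 2\mH_i^+$; substituting this into the element-wise expansion above and summing over $i$ gives $\dx^\top \mH \dx + \dx^\top |\mH_e| \dx = 2\dx^\top \mH_e^+ \dx$.

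There is essentially no hard step; the only point requiring care is the first claim of the second paragraph, namely that $|\mH_i|$ and $\mH_i^+$ inherit the eigenvectors of $\mH_i$, since this is exactly what lets the three quadratic forms be compared term by term (as an equality) rather than merely bounded. Everything else is bookkeeping with the selection matrices $\mP_i$, mirroring the assembly manipulation already carried out in the proof of \reflem{abs_per_element_constraint}.
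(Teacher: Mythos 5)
Your proof is correct and takes essentially the same route as the paper's: both rest on the fact that $\mH_i$, $|\mH_i|$, and $\mH_i^+$ share the eigenbasis $\{\ve_{ik}\}$, so the identity reduces to the scalar fact $\lambda + |\lambda| = 2\max(\lambda,0)$ applied eigenvalue by eigenvalue and then summed over elements. The only (cosmetic) difference is that you package this as a per-element matrix identity $\mH_i + |\mH_i| = 2\mH_i^+$ before forming quadratic forms, whereas the paper carries out the same computation directly on the expanded quadratic forms $\sum_i \sum_k \lambda_{ik}\left((\mP_i \dx)^\top \ve_{ik}\right)^2$.
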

  
Let $\lambda_{ik}^+$ and $\lambda_{ik}^-$ be the positive and negative eigenvalues of each $\mA_i$.
Then we have:
\begin{equation}  
  \begin{aligned} 
    & \vx^{\top} \mA \vx + \vx^{\top} |\mA_e| \vx \\
    & = \sum_i \sum_k \lambda_{ik} \left( (\mP_i \vx)^{\top} \ve_{ik} \right)^2 + \sum_i \sum_k \left| \lambda_{ik} \right| \left( (\mP_i \vx)^{\top} \ve_{ik} \right)^2 \\
    & = 2 \sum_i \sum_k \lambda_{ik}^+ \left( (\mP_i \vx)^{\top} \ve_{ik} \right)^2 \\
    & = 2 \vx^{\top} \mA_e^+ \vx.
    \nonumber
\end{aligned} 
\end{equation} 
\qed

\end{document}